\def\textbf#1{{\bf #1}}
\def\be{\begin{equation}}
\def\ee{\end{equation}}
\def\ben{\begin{eqnarray}}
\def\een{\end{eqnarray}}
\def\eea{\end{array}}
\def\bea{\begin{array}}
\newcommand{\Tr}[0]{\mathrm{Tr}}
\newcommand{\ot}[0]{\otimes}
\newcommand{\bei}{\begin{itemize}}
\newcommand{\eei}{\end{itemize}}
\newcommand{\ket}[1]{|#1\rangle}
\definecolor{myurlcolor}{rgb}{0,0,0.7}
\definecolor{myrefcolor}{rgb}{0.8,0,0}
\newtheorem{thm}{Theorem}
\theoremstyle{definition}
\begin{document}

\title{Detecting non-locality in multipartite quantum systems with two-body correlation functions}

\author{J. Tura}
\affiliation{ICFO--Institut de Ciencies Fotoniques, 08860
Castelldefels (Barcelona), Spain}

\author{R. Augusiak}
\affiliation{ICFO--Institut de Ciencies Fotoniques, 08860
Castelldefels (Barcelona), Spain}

\author{A. B. Sainz}
\affiliation{ICFO--Institut de Ciencies Fotoniques, 08860
Castelldefels (Barcelona), Spain}

\author{T. V\'ertesi}
\affiliation{Institute of Nuclear Research of the Hungarian Academy of Sciences
H-4001 Debrecen, P.O. Box 51, Hungary}

\author{M. Lewenstein}
\affiliation{ICFO--Institut de Ciencies Fotoniques, 08860
Castelldefels (Barcelona), Spain} \affiliation{ICREA--Institucio
Catalana de Recerca i Estudis Avan\c{c}ats, Lluis Companys 23,
08010 Barcelona, Spain}

\author{A. Ac\'in}
\affiliation{ICFO--Institut de Ciencies Fotoniques, 08860
Castelldefels (Barcelona), Spain}
\affiliation{ICREA--Institucio
Catalana de Recerca i Estudis Avan\c{c}ats, Lluis Companys 23,
08010 Barcelona, Spain}

\begin{abstract}
Bell inequalities define experimentally observable quantities to
detect non-locality. In general, they involve correlation
functions of all the parties. Unfortunately, these measurements
are hard to implement for systems consisting of many constituents,
where only few-body correlation functions are accessible. Here we
demonstrate that higher-order correlation functions are not
necessary to certify nonlocality in multipartite quantum states by
constructing Bell inequalities from one- and two-body correlation
functions for an arbitrary number of parties. The obtained
inequalities are violated by some of the Dicke states, which arise
naturally in many-body physics as the ground states of the
two-body Lipkin-Meshkov-Glick Hamiltonian.
\end{abstract}

\keywords{} \pacs{}

\maketitle

Local measurements on entangled composite quantum systems may lead
to correlations that cannot be simulated by any local
deterministic strategy assisted by shared randomness~
\cite{bell64,NLreview}. This phenomenon is known as nonlocality.
Apart from its fundamental interest, non-locality has also
turned into a key resource for certain
information-theoretic tasks, such as key distribution~\cite{key}
or certified quantum randomness generation~\cite{rand}.
Hence, revealing the nonlocality of a given
composite quantum state, or, in other words, certifying that it
can be used to generate nonlocal correlations upon local
measurements, is one of the central problems of quantum
information theory.

Due to the structure of the set of classical correlations (see
below), the natural way of tackling this problem is
to use Bell inequalities \cite{bell64}. These are linear
inequalities formulated in terms of expectation values
(correlators) of tensor products of measurements performed by the
observers, and their violation signals nonlocality. Many
constructions of Bell inequalities have been proposed (see e.g.
Refs. \cite{nierownosci}), however, most of them involve
full-order correlators, that is, expectation values of observables
of all parties). Intuitively, the latter
carry most of the information about correlations, and consequently
Bell inequalities based on them are the strongest ones, or even
tight (see, e.g., Refs. \cite{Sliwa,JD1}). But are these all-partite mean values necessary to reveal nonlocality? It was recently shown in Ref.
\cite{BSV,Wiesniak} that this is not the case, although expectation values with all but one parties are still involved. Hence, one is led to the more demanding question of whether certification of non-locality is possible from the minimal information achievable in this type of experiments, i.e. two-body expectation values.

This question also arises naturally in the context of experimental
implementations of Bell tests, in particular, in multipartite
systems. It should be stressed that several interesting
multipartite states are already within reach of current
experimental technology. In particular, four-qubit Smolin state
\cite{expSmolin}, eight-qubit GHZ state \cite{expGHZ}, and various
Dicke states \cite{DickeExp,DickeExp2} were experimentally
generated. However, in the case of large systems determining
experimentally expectation values of high-order is a hard task.
Designing nonlocality tests that rely solely on low-order
correlators would facilitate their experimental implementation.

It should also be stressed that an analogous question was already
explored in the case of entanglement, which next to nonlocality is
a key resource of quantum information theory
\cite{Horodeckireview}. Several entanglement criteria relying
solely on two-body expectation values have been proposed
\cite{EntTwoBody}. In particular, in \cite{coll} the possibility
of adressing two-body statistics (although not individually)
\textit{via} collective observables was exploited.

In this letter we address the above question and propose a class
of Bell inequalities constructed only from one and two-body
correlators that are violated by quantum states. We simplify the
problem by considering a subclass of symmetric Bell inequalities,
i.e., those that are invariant under a swap of any pair of parties
and characterize the corresponding polytope of classical
correlations. We also show that our inequalities are powerful
enough to certify nonlocality of the Dicke states that are ground
states of the two-body Lipkin-Meshkov-Glick Hamiltonian \cite{LMG}, making
our results promising from the experimental point of view.

\textit{Preliminaries.--}Let us consider the standard Bell-type
experiment in which $N$ spatially separated observers perform
measurements on their shares of some $N$-partite composite quantum
state $\rho$. In what follows we focus on
the simplest case where each party freely chooses
one between two dichotomic measurements, whose
outcomes we denote $\pm1$. A convenient way of
describing the established correlations in the two-outcome case is
to use the collection of expectation values (also called
correlators)
\begin{equation}\label{correlators}
\{\langle\mathcal{M}_{j_1}^{(i_1)}\ldots
\mathcal{M}^{(i_k)}_{j_k}\rangle\;|\; k=1,\ldots,N\}\qquad
\end{equation}
with $i_l=1,\ldots,N$ and $j_l=0,1$
$(l=1,\ldots,k)$.
We will refer to these collections as to ordered real vectors of
dimension $3^N-1$, and by saying correlations we mean
the corresponding vector. Also, the \textit{order} of
a correlator is the number of parties $k$ it involves [cf Eq.
(\ref{correlators})], and, in particular, those with $k=N$ we call the
\textit{highest-order} correlators, while those with $k=2$ the lowest-order or two-body correlators.

Within this framework, we say that the correlations represented by
(\ref{correlators}) are \textit{classical} (or \textit{local})
whenever, even if obtained
from composite quantum states, they can
be simulated by the observers with some shared classical
information as the only resource. Such correlations form a
polytope $\mathbbm{P}$, whose vertices are those collections
(\ref{correlators}) in which every correlator takes the product
form $\langle\mathcal{M}_{j_1}^{(i_1)}\ldots
\mathcal{M}^{(i_k)}_{j_k}\rangle=\langle
\mathcal{M}_{j_1}^{(i_1)}\rangle\cdot\ldots\cdot \langle
\mathcal{M}_{j_k}^{(i_k)}\rangle$ with individual mean values
$\langle \mathcal{M}_{j_1}^{(i_1)}\rangle$ being $\pm1$.

Bell was the first to recognize that the set of classical
correlations can be constrained by certain inequalities,
referred to as Bell inequalities \cite{bell64}. In fact, since
classical correlations form a polytope, $\mathbbm{P}$ can be fully
determined by a finite number of \textit{tight} Bell inequalities,
i.e., those corresponding to the facets of $\mathbbm{P}$.
Correlations that fall outside of $\mathbbm{P}$ are called
nonlocal. Consequently, the problem of characterizing all
classical correlations reduces to finding all tight Bell
inequalities for a given scenario. And, even if it sounds simple,
the problem is difficult to resolve as the number of facets
of the local polytope grows rapidly with the number of parties.

\textit{Bell inequalities from one- and two-body correlators.--}Most of the known
constructions of multipartite Bell inequalities contain
highest-order correlators, i.e., those with $k=N$ in Eq.
(\ref{correlators}).
In the following, we will see that 
one can design Bell inequalities that
witness nonlocality only from one and two-body expectation values.
A general form of such a Bell inequality is
\begin{eqnarray}\label{2corrBI}
&&\hspace{-0.4cm}\sum_{i=1}^{N}(\alpha_i\langle \mathcal{M}^{(i)}_0\rangle
+\beta_i\langle\mathcal{M}^{(i)}_1\rangle)+\sum_{i<j}^{N}\gamma_{ij}
\langle \mathcal{M}_{0}^{(i)}\mathcal{M}_{0}^{(j)}\rangle+\nonumber\\
&&\hspace{-0.4cm}+\sum_{i\neq j}^{N}\delta_{ij}
\langle
\mathcal{M}_{0}^{(i)}\mathcal{M}_{1}^{(j)}\rangle+\sum_{i<j}^{N}\varepsilon_{ij}
\langle \mathcal{M}_{1}^{(i)}\mathcal{M}_{1}^{(j)}\rangle+
\beta_C\geq 0,
\end{eqnarray}
where $\alpha_i,\beta_j,\gamma_{ij},\delta_{ij}$, and
$\varepsilon_{ij}$ are some real parameters, while $\beta_C$ is the
so-called classical bound. The corresponding polytope
$\mathbbm{P}_2$ of classical correlations is one constructed from
the elements of $\mathbbm{P}$ by neglecting correlators of order
higher than two. In other words, we take all elements (vectors) of
$\mathbbm{P}$ and simply remove those with $k\geq 3$ [cf. Eq.
(\ref{correlators})]. Analogously, the vertices of $\mathbbm{P}_2$
are those collections of correlators for which
$\langle\mathcal{M}_{k}^{(i)}\mathcal{M}_{l}^{(j)}\rangle=\langle\mathcal{M}_{k}
^{(i)}\rangle\cdot\langle\mathcal{M}_{l}^{(j)}\rangle$, while the
individual mean values are $\pm1$.

The characterization of $\mathbbm{P}_2$ reduces to finding all its
facets, i.e., \textit{tight two-body Bell inequalities}. Although
$\dim\mathbbm{P}_2=2N^2$ is much smaller than the one of
$\mathbbm{P}$, $3^N-1$, it still grows with $N$, thus
difficulting the task of determining facets of
$\mathbbm{P}_2$. One way to overcome this
problem (and keep the dimension constant irrespectively of $N$) is
to consider Bell inequalities that obey some symmetries. For
instance, one could consider translationally invariant Bell
inequalites consisting of correlators involving only nearest
neighbours, or, in the spirit of Ref. \cite{JD1}, those that are
invariant under any permutation of the parties. While we leave the
first case for further studies, below we focus on the second case
and construct symmetric Bell inequalities with one and two-body
correlators.

By imposing the permutational symmetry, one requires that the
expectation values $\langle\mathcal{M}_{k}^{(i)}\rangle$ and
$\langle\mathcal{M}_k^{(i)}\mathcal{M}_l^{(j)}\rangle$,
with fixed $k,l$ and different $i,j$,
appear in the Bell inequality (\ref{2corrBI}) with the same ``weigths'', i.e.,
$\alpha_i=\alpha$, $\beta_i=\beta$, etc. This means that the
general form of a symmetric Bell inequality with one- and two-body
correlators is
\begin{equation}\label{BellIneq}
I
:=\alpha \mathcal{S}_0+\beta
\mathcal{S}_1+\frac{\gamma}{2}\mathcal{S}_{00}+\delta\mathcal{S}_{01}
+\frac{\varepsilon}{2}\mathcal{S}_{11}\geq -\beta_{C},
\end{equation}
where 
$\alpha,\beta,\gamma,\delta,\varepsilon$ are real parameters. Then, by
$\mathcal{S}_k$ and
$\mathcal{S}_{kl}$ with $k,l=0,1$ we denote the one-
and two-body correlators symmetrized over all observers, i.e.,
\begin{equation}
\mathcal{S}_k=\sum_{i=1}^{N}\langle \mathcal{M}_{k}^{(i)}\rangle, \qquad
\mathcal{S}_{kl}=\sum_{i\neq j=1}^{N}\langle
\mathcal{M}_{k}^{(i)}\mathcal{M}_{l}^{(j)}\rangle.
\end{equation}
%

Geometrically, under this symmetry the polytope
$\mathbbm{P}_2$ is mapped to a simpler one $\mathbbm{P}_{2}^{S}$, which,
idenpendently of $N$, is always five-dimensional and its elements are
vectors
$(\mathcal{S}_0,\mathcal{S}_1,\mathcal{S}_{00},\mathcal{S}_{01},\mathcal{
S}_{11} )$. Accordingly, $\mathbbm{P}_2^{S}$ is fully characterized if one knows
all its facets, which we call \textit{tight symmetric two-body Bell
inequalities}. Moreover, the number of vertices is significantly
reduced from $2^{2N}$ of $\mathbbm{P}_2$ to $2(N^2+1)$ of $\mathbbm{P}_2^S$,
and, as we will see below, vertices of the latter can be conveniently
parameterized by three natural numbers (see appendix A for more details).
Precisely, for a given local deterministic model,
let us denote by $a$, $b$, $c$, and $d$ the amount of
parties whose local expectation values $\langle\mathcal{M}^{(i)}_k\rangle$
$(k=0,1)$ are $\{1,1\}$, $\{1,-1\}$, $\{-1,1\}$,
and $\{-1,-1\}$, respectively. By definition
$a+b+c+d=N$, and therefore all vertices of $\mathbbm{P}_2$ are mapped under the
symmetry to four-tuples $(a,b,c,d)$ forming a tetrahedron $\mathbbm{T}_N$ in
$\mathbbm{N}^3$ whose facets are determined by vanishing one of $a$, $b$, $c$,
or $d$. One can then prove (see appendix A) that all
vertices of $\mathbbm{P}_2^{S}$ are uniquely represented by those
four-tuples that belong to the boundary $\partial \mathbbm{T}_N$ of
$\mathbbm{T}_N$.

Then, for any local deterministic model the one-body
symmetrized expectation values can be expressed within this parametrization as
$\mathcal{S}_k=a+(-1)^k(b-c)-d$ with $k=0,1$. 
Moreover, since for any vertex of $\mathbbm{P}_2$ it holds that
$\mathcal{S}_{kl}=\mathcal{S}_k\mathcal{S}_l-\sum_{i=1}
^N\langle\mathcal{M}_k^{(i)}\rangle\langle\mathcal{M}_l^{(i)}\rangle$
$(k,l=0,1)$, the two-body expectation values are given by
$\mathcal{S}_{ll}=\mathcal{S}_l^2-N$, with $l=0,1$, and
$\mathcal{S}_{01}=\mathcal{S}_0\mathcal{S}_1-(a-b-c+d)$.
As a consequence, computing the classical bound of the
Bell inequality (\ref{BellIneq}) is equivalent to minimizing $I$
being a function of $a,b,c$, and $d$ over the boundary of
$\mathbbm{T}_N$, i.e., $\beta_{C}=-\min_{\partial \mathbbm{T}_N}I$.

\textit{A class of symmetric two-body Bell inequalities.--} Using
the above characterization of the symmetric polytope of two-body
local models, we can now search for particular Bell inequalities
violated by multipartite quantum states. For sufficiently low
number of parties all Bell inequalities corresponding to the
facets of $\mathbbm{P}_2^S$ can be listed with the aid of a
computer algorithm and they will be presented elsewhere
\cite{tech}. Here we present a general class of few-parameter
symmetric tight Bell inequalities and show that they reveal
nonlocality in quantum states for any $N$.

To this end, in Eq. (\ref{BellIneq}) we substitute $\gamma=x^2$
and $\varepsilon=y^2$ with $x,y$ being positive natural numbers,
and $\delta=\sigma xy$, where $\sigma=\pm1$ stands for the sign of
$\delta$. Moreover, let $\alpha_{\pm}=x[\sigma\mu\pm(x+y)]$ with
$\mu\equiv\beta/y$, and assume that $\mu$ is an integer
with opposite parity to $\varepsilon$
($\gamma$) for odd $N$ (even $N$). Exploiting the above parameterization one
then proves that the classical bound of the resulting Bell inequality
is (see appendix B)
\begin{equation}\label{CB}
\beta_C=\frac{1}{2}[N(x+y)^2+(\sigma\mu\pm x)^2-1].
\end{equation}
Interestingly, under the additional assumption that $x$ and $y$ are
coprimes, one can also analyze their tightness by hand
(see Ref. \cite{tech} for more details), and it follows
that this class contains a significant amount of facets of
$\mathbbm{P}_{2}^{S}$ (see Table \ref{table}).
%
Specifically, this class contains those tight Bell inequalities
that are tangent to $\mathbbm{P}_{2}^{S}$ at vertices belonging to
a single facet of $\mathbbm{T}_N$. A particular
example of a Bell inequality of this form, arises
from $x=y=-\sigma=1$, and $\alpha_-=-2$. According to
(\ref{CB}), $\beta_C=2N$ and the resulting Bell inequality is

\begin{equation}\label{desigualdad}
-2\mathcal{S}_0+\frac{1}{2}\mathcal{S}_{00}-\mathcal{S}_{01}+
\frac{1}{2}\mathcal{S}_{11}+2N\geq 0.
\end{equation}
\begin{table}[]
\begin{tabular}{|c|c|c|}
  \hline
$N$    & $\#$ Bell inequalities in the class & Total $\#$ of tight Bell
inequalities\\
  \hline

 5  & 16    & 152\\
 10 & 272   & 2018\\
 15 & 1208  & 7744\\
 20 & 3592  & 21274\\
 \hline
\end{tabular}
\caption{The number of facets (second column) of
$\mathbbm{P}_2^{S}$ that are grasped by our class of Bell
inequalities for various numbers of parties $N$ (first column).
For comparison the third column contains the total number of
facets of $\mathbbm{P}_2^{S}$.}\label{table}
\end{table}

To search for quantum violations of (\ref{desigualdad}), we assume
that all parties have the same pairs of observables, i.e.,
$\mathcal{M}^{(i)}_j=\mathcal{M}_j$ for every $i$.
Without loss of optimality, these observables can be taken to be
equal to $\mathcal{M}_0=\sigma_z$ and
$\mathcal{M}_1=\cos\theta\sigma_z+\sin\theta\sigma_x$ for
$\theta\in[0,\pi]$~\cite{Masanes}. Denoting then by
$\mathcal{B}_N(\theta)$ the Bell operator
constructed from the above observables, the
Bell inequality (\ref{desigualdad}) is violated if there is
$\theta$ such that $\mathcal{B}_N(\theta)\ngeqslant 0$. We have
numerically searched for the lowest negative eigenvalue of
$\mathcal{B}_N(\theta)$ for various values of $N$ and the obtained
results are presented on Fig. \ref{fig:qv}. Clearly, the effective
violation (divided by the classical bound) grows with $N$, and
becomes more robust against misalignments of $\theta$ for large
$N$.
\begin{figure}[t]
\includegraphics[trim=3 2 16 7,clip,width=0.56\columnwidth]{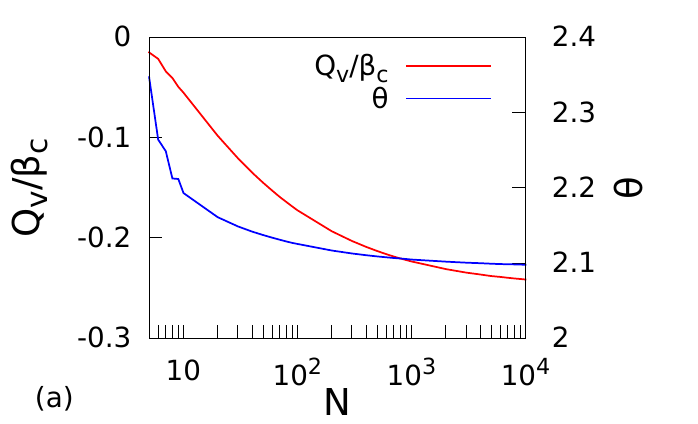}
\includegraphics[trim=3 2.5 11 7,clip,width=0.43\columnwidth,height=3.02cm]{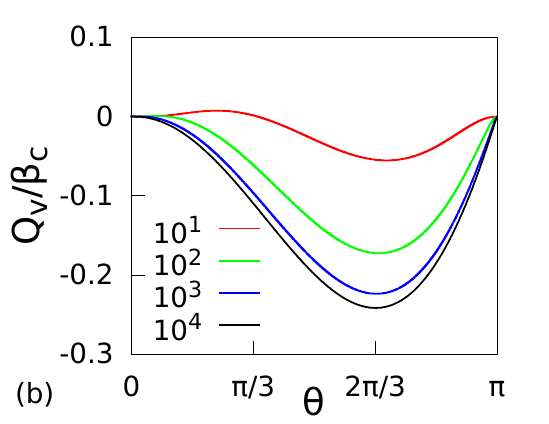}
\caption{(a) The effective (divided by the classical bound) maximal violation of
Ineq. (\ref{desigualdad}) (red line) and the corresponding angle
$\theta$ in $\mathcal{M}_1$ (blue line) as functions of $N$.
(b) Effective violation of Ineq. (\ref{desigualdad}) as a function of
$\theta$ for $N=10^k$ with $k=1,2,3,4$. For large $N$ the
violation is robust against misalignments of the second
observable.}\label{fig:qv}
\end{figure}
Also, the corresponding eigenstate of $\mathcal{B}_N(\theta)$,
i.e., the state maximally violating (\ref{desigualdad}) is always symmetric
(but there are also antisymmetric states violating this inequality),
that is, it is invariant under permutation of any two particles.
Consequently, since any (also mixed) $N$-qubit symmetric state is entangled if,
and only if, it is genuinely multipartite entangled (see e.g. Refs.
\cite{Symm}), our Bell inequalities detect states that
have genuine multipartite entanglement.

{\it Nonlocality of physically relevant states.--}As we have just demonstrated,
the two-body multipartite Bell inequalities
detect nonlocality of multipartite quantum states. But are
these inequalities powerful enough to reveal nonlocality in ``physically
relevant'' states, as for instance ground states of spin models that naturally
appear in many-body physics? Here we show that this is the case by constructing
a class of two-body symmetric Bell inequalities that are violated by the Dicke
states \cite{Dicke}. These are $N$-qubit states spanning the $(N+1)$-dimensional
symmetric subspace of $(\mathbbm{C}^2)^{\ot N}$ and read
\begin{equation}
 \ket{D_N^{k}}=\mathcal{S}(\ket{\{0,N-k\},\{1,k\}}) \quad (k=0,\ldots,N),
\end{equation}
where $\ket{\{0,N-k\},\{1,k\}}$ is any pure product vector with $N-k$ qubits in
the state $\ket{0}$ and $k$ in the state $\ket{1}$, while $\mathcal{S}$ denotes
symmetrization over all parties. It is worth mentioning that
$\ket{D_N^k}$ are genuinely multipartite entangled for any $k\neq 0,N$.
Moreover, their entanglement properties have been extensively studied in
the literature (see e.g. Refs. \cite{LOR,DickeOtfried} and references therein),
and the state $\ket{D_6^3}$ was recently generated experimentally
\cite{DickeExp}.

In many-body physics, the Dicke states arise naturally as the lowest-energy
eigenstates of the isotropic Lipkin-Meshkov-Glick Hamiltonian \cite{LMG}:
\begin{equation}
H=-\frac{\lambda}{N}\sum_{\substack{i,j=1\\i<j}}^{N}\left(\sigma_x^{(i)}
\sigma_x^ { (j) } + 
\sigma_y^{(i)}\sigma_y^{(j)}\right)-h\sum_{i=1}^{N}\sigma_z^{(i)},
 \label{eq:LMG}
\end{equation}
which describes $N$ spins interacting
through the two-body ferromagnetic
coupling $(\lambda>0)$, embedded into the magnetic
field acting along the $z$ direction of strength $h\geq 0$. Again,
$\sigma_{a}^{(i)}$ ($a=x,y,z$) are the Pauli matrices acting at site $i$.


In what follows we consider the case of weak magnetic field
applied to the system, precisely $h \leq \lambda /N$. Then,
the ground state of $H$ is $\ket{D^{N/2}}$ for even $N$ and $\ket{D^{\lceil
N/2\rceil}}$ for odd $N$, except for the case of $h=0$ and odd $N$, for which the lowest energy is two-fold degenerate and the corresponding subspace is spanned by $\ket{D_{N}^{k}}$, with $k=\lfloor N/2\rfloor$ and $k=\lceil N/2\rceil$.

The class of tight two-body symmetric Bell inequalities
that we use to detect nonlocality of the above Dicke states is obtained by
taking $\alpha_N=N(N-1)(\lceil N/2\rceil-N/2)$, $\beta_N=\alpha_N/N$, $\gamma_N=N(N-1)/2$, $\delta_N=N/2$, and
$\varepsilon_N=-1$ in Eq.
(\ref{BellIneq}). This choice of parameters allows us to compute
analytically the classical bounds of the resulting Bell
inequalities for any $N$. Precisely, the minimization over
$\partial\mathbbm{T}_N$
gives
\begin{equation}
\beta_C(N)=\frac{1}{2}N(N-1)\left\lceil\frac{N+2}{2}\right\rceil
%
\end{equation}
and also allows one to find five vertices at which these Bell inequalities
are tangent to $\mathbbm{P}_2^S$, ensuring their tightness.
(cf. appendix C for the proof). It should be noticed that these Bell
inequalities are independent of the class presented previously,
and for $N=2$ they reproduce the CHSH Bell inequality \cite{CHSH}.

To prove that the resulting Bell inequalities are indeed violated
by the Dicke states, let again $\mathcal{M}_j^{(i)}$ $(j=0,1)$ be
the qubit dichotomic observables at site $i$. Denoting by
$\mathcal{B}_N$ the resulting Bell operator, the direct way to
reach the goal is to minimize the mean value
$\langle D_N^k|\mathcal{B}_N|D_N^k\rangle$ with $k=\lfloor
N/2\rfloor,\lceil N/2\rceil$. As before, we assume for simplicity that all observers measure the same pair of observables, i.e.,
$\mathcal{M}^{(i)}_j=\mathcal{M}_j$. This makes $\mathcal{B}_N$
permutationally invariant, which together with the fact that the
Dicke states are symmetric significantly simplifies the above
problem. In fact, it follows that
%
$\langle
D_N^k|\mathcal{B}_N|D_N^k\rangle=\Tr(\rho_{N}^k\widetilde{\mathcal{
B}}_N)$,
%
where $\widetilde{\mathcal{B}}_N$ stands for the
two-qubit ``reduced'' Bell operator
\begin{eqnarray}\label{BN}
\widetilde{\mathcal{B}}_N&=&\beta_{C}(N)\mathbbm{1}
_4+\tfrac{N}{2}\alpha_N(\mathcal { M
}_0\ot\mathbbm{1}_2+\mathbbm{1}_2\ot\mathcal{M}_0)\nonumber\\
&&+\tfrac{N(N-1)}{2}\left[\gamma_N\mathcal{M}_0\ot\mathcal{M}_0+
\varepsilon_N\mathcal{M}_1\ot\mathcal{M}_1\right.\nonumber\\
&&\left.\hspace{2cm}+\delta_N\left(\mathcal{M}_0\ot\mathcal{M}_1+\mathcal{M}
_1\ot\mathcal{M}_0\right)\right],\nonumber\\
&&+\tfrac{N}{2}\beta_N(\mathcal{M}_1\ot\mathbbm{1}_2+\mathbbm{1}
_2\ot\mathcal { M}_1)
\end{eqnarray}
with $\mathbbm{1}_d$ being a $d\times d$ identity matrix, and
$\rho_{N}^k$ denotes any two-qubit subsystem of $\ket{D_{N}^k}$.
Notice that the latter can be computed by hand for any $N$ and $k$
(see appendix C).

As before, let us finally set the measurements to
$\mathcal{M}_0=\sigma_z$ and
$\mathcal{M}_1=\cos\theta\sigma_z+\sin\theta\sigma_x$ with
$\theta\in[0,\pi]$, and choose the particular Dicke state with
$k=\lceil N/2\rceil$ excitations, for any $N$.
Then, $\langle D_N^k|\mathcal{B}_N|D_N^k\rangle$ can be computed
to be $4\lfloor N/2\rfloor\sin^2(\theta/2)[(\lceil
N/2\rceil+1)\sin^2(\theta/2)-1]$, and the latter attains its
minimum for
\begin{equation}
\theta_{\min}^N=\pm \arccos\left(\frac{\lceil N/2\rceil}{\lceil
N/2\rceil+1}\right),
%
\end{equation}
resulting  in the following quantum violations
\begin{equation}
\langle
D_N^{\lceil
N/2\rceil}|\mathcal{B}_N|D_N^{\lceil N/2\rceil}\rangle=
-\frac{\lfloor N/2\rfloor}{\lceil N/2\rceil+1}.
\end{equation}
Similar values can be obtained for $k=\lfloor N/2\rfloor$:
it is enough to rename (flip) the
outcomes of $\mathcal{M}_j$ $(j=0,1)$, because
$\ket{D_N^{\lfloor N/2\rfloor}}$ is
obtained from $\ket{D_N^{\lceil N/2\rceil}}$ by swapping the
elements of the computational basis $\{\ket{0},\ket{1}\}$.


To summarize, our Bell inequalities are
violated by the Dicke states for any $N$,
although the effective violation decays with
$N$ as $1/N^3$ (see Fig. \ref{fig2}). It should be stressed that,
even though from the previous analysis one may conclude that this
violation is purely bipartite, this is certainly not the case. 
The Dicke states are symmetric, and therefore any marginal bipartite correlations obtained from them in a Bell experiment with the same two dichotomic observables per site are local; otherwise all bipartite marginal correlations would be nonlocal, contradicting the fact that in this case quantum correlations are monogamous \cite{TV}. This also means that our results provide further examples, after \cite{Lars}, of local marginal bipartite correlations that are only compatible with global multipartite nonlocal correlations (see also Ref. \cite{tech}).


%
\begin{figure}[t]
\includegraphics[width=0.49\columnwidth]{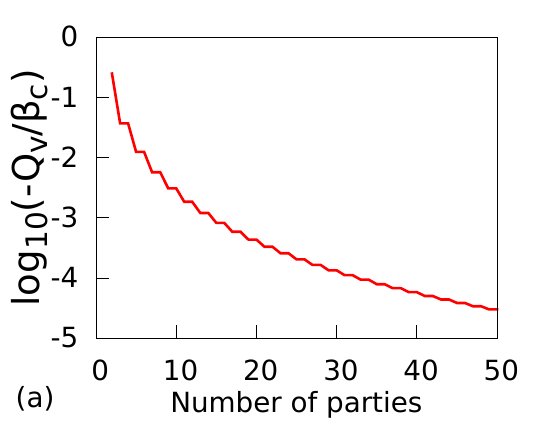}
\includegraphics[width=0.49\columnwidth]{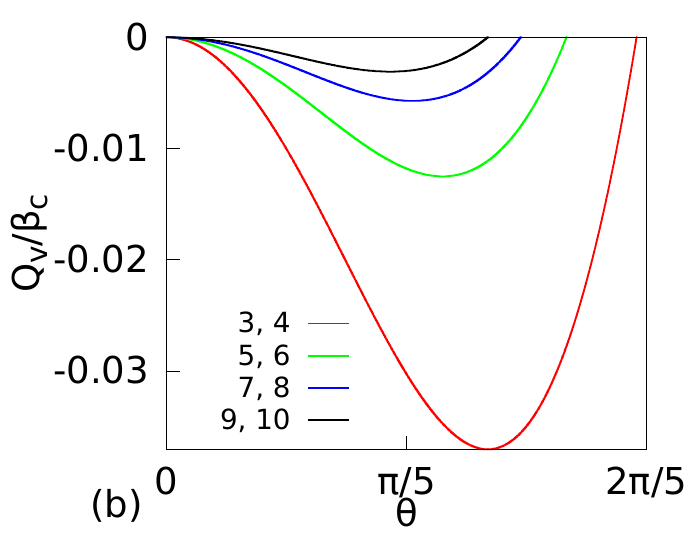}
\caption{(a) Effective violation of Ineq. (\ref{desigualdad}) by
the Dicke states $\ket{D_N^{\lceil N/2\rceil}}$ as a function of
$N$. The violation decays with $N$ as $1/N^3$. (b) Effective violation as a
function of $\theta$ for various values of $N$. }\label{fig2}
\end{figure}
%

\textit{Discussion and conclusion.--}Bell inequalities allow
detecting nonlocality in composite quantum states. They involve
expectation values of products of local measurements performed by
the observers. A natural question is about the minimal amount of
knowledge (in terms of the size of correlators) that is needed to
reveal nonlocality. Here we have demonstrated that multipartite two-body Bell
inequalities are enough to witness nonlocality for an arbitrary number of parties.

Interestingly, our inequalities are violated by
symmetric states, making our result feasible from the experimental
point of view. First, such states appear naturally as ground
states of models that can be realized with ultracold atoms or
ions, such as Lipkin-Meshkov-Glick like models with long range
interactions (for ionic spin 1/2 and spin 1 realizations see
\cite{Porras,Grass}, for cold atoms in nanophotonic waveguides see
\cite{Chang}), or degenerated ground states of the ferromagnetic Heisenberg
model \cite{Sachdev}. Second, in multipartite symmetric states one-body and
two-body expectation values can be addressed via collective measurements of
total spin operators $S_{\alpha}=(1/2)\sum_{i=1}^{N}\sigma_{\alpha}^{(i)}$
$(\alpha=x,y,z)$ and simple second-order functions thereof, respectively (see
e.g. Ref. \cite{Molmer}). We have shown, for instance, that the nonlocality of
some of the half-filled Dicke states can be certified from only two correlators
$\langle \sigma_z^{(1)}\sigma_z^{(2)}\rangle$ and $\langle
\sigma_z^{(1)}\sigma_x^{(2)}\rangle$, which within the above framework
can be expressed as $\langle \sigma_z^{(1)}\sigma_z^{(2)}\rangle=
(4\langle S_z^{2}\rangle-N)/N(N-1)$ and $\langle
\sigma_z^{(1)}\sigma_x^{(2)}\rangle=([S_z,S_x]_{+}\rangle)/N(N-1)$
with $[\cdot]_{+}$ denoting the anticommutator. Such measurements are
routinely realized in atomic systems with current
experimental technologies, such as spin polarization spectroscopy
\cite{Hammerer,Isart}.

{\it Acknowledgments.--}Discussions with J. Stasi\'nska are
greatly acknowledged. This work is supported by Spanish DIQIP
CHIST-ERA, FIS2010-14830 projects and AP2009-1174 FPU PhD grant, EU IP
SIQS, ERC AdG QUAGATUA and StG PERCENT. R. A. also acknowledges
the Spanish MINECO for the Juan de la Cierva scholarship.


\appendix

\section{APPENDIX}


\subsection{Appendix A: Characterization of vertices of $\mathbbm{P}_2^S$}

Let $\mathbbm{P}_2$ be the polytope of all local models constructed from
$\mathbbm{P}$ by forgetting the correlators of order higher than two, and
let $V$ be a set containing $2^N$ vertices of $\mathbbm{P}_2$. Recall
that the
latter are those local strategies for which
$\langle\mathcal{M}^{(i)}_k\mathcal{M}^{(j)}_l\rangle=\langle\mathcal{M}^{(i)}_k
\rangle\cdot \langle\mathcal{M}^{(j)}_l\rangle$ and
$\langle\mathcal{M}^{(i)}_k\rangle=\pm1$, for $i,j=1,\ldots,N$
and $k,l=0,1$. Moreover, by $\mathbbm{P}_{2}^S$ we denote the
image of $\mathbbm{P}_2$ under symmetrization, i.e., the set of five-dimensional
vectors
\begin{equation}\label{five}
(\mathcal{S}_0,\mathcal{S}_1,\mathcal{S}_{00},\mathcal{S}_{01},\mathcal{S}_{11})
\end{equation}
with $\mathcal{S}_{k}$ and $\mathcal{S}_{kl}$ $(k,l=0,1)$ defined by
\begin{equation}
\mathcal{S}_k=\sum_{i=1}^{N}\langle\mathcal{M}^{(i)}_k\rangle,\qquad
\mathcal{S}_{kl}=\sum_{\substack{i,j=1\\i\neq
j}}^{N}\langle\mathcal{M}^{(i)}_k
\mathcal{M}_l^{(j)}\rangle,
\end{equation}
that are computed for all elements of $\mathbbm{P}_2$. Analogously, by $V_S$ we
denote the set of vertices of $\mathbbm{P}_{2}^S$. Our aim now is to
characterize the elements of $V_S$ and identify those vertices
of $\mathbbm{P}_2$ that are mapped onto the vertices of $\mathbbm{P}_{2,S}$. In
particular, we will demonstrate that $|V_S|=2(N^2+1)$, which
is a significantly smaller number than $|V|=2^{2N}$.

To this end, for every element of $V$ we denote by
\begin{equation}
x_i= \langle\mathcal{M}_0^{(i)}\rangle,\qquad
y_i=\langle\mathcal{M}_1^{(i)}\rangle,
\end{equation}
the pair of local deterministic expectation values (those that assume values
$\pm 1$) for party $i$, and by $\{x_i,y_i\}$ the corresponding local strategy.
Then, we notice that the values of $\mathcal{S}_0$ and $\mathcal{S}_1$ do not
depend on particular local strategies applied by the parties but rather on their
amount. This suggests introducing the following parametrization:
\begin{eqnarray}\label{parametrization}
a&=&
\#\{i\in\{1,\ldots,N\}\,|\,x_i=1,y_i=1\}
\nonumber\\
b&=&\#\{i\in\{1,\ldots,N\}\,|\,x_i=1,y_i=-1\},\nonumber\\
c&=&\#\{i\in\{1,\ldots,N\}\,|\,x_i=-1,y_i=1\},\nonumber\\
d&=&\#\{i\in\{1,\ldots,N\}\,|\,x_i=-1,y_i=-1\}.
\end{eqnarray}
In other words, for a given vertex of $\mathbbm{P}_2$,
$a$, $b$, $c$, and $d$ stand for the number of
parties who apply one of the four different local strategies $\{1,1\}$,
$\{1,-1\}$, $\{-1,1\}$, or $\{-1,-1\}$, respectively.
Clearly, $a+b+c+d=N$, and by means
of these four numbers, the symmetrized local expectation values
$\mathcal{S}_k$ $(k=0,1)$ can be expressed as
\begin{equation}\label{S0S1}
\mathcal{S}_0=a+b-c-d, \qquad
\mathcal{S}_1=a-b+c-d.
\end{equation}
Furthermore, since for every element of $V$ it holds that
\begin{equation}
\mathcal{S}_{xy}=\mathcal{S}_x\mathcal{S}_y-\sum_{i=1}^{N}\langle\mathcal{M}_x^{
(i) } \rangle\langle\mathcal { M } _y^{(i)}\rangle\qquad (x,y=0,1),
\end{equation}
by using the parametrization (\ref{parametrization}) together with Eqs.
(\ref{S0S1}), we can rewrite the two-body symmetrized
expectation values as
\begin{eqnarray}\label{jazz}
\mathcal{S}_{00}&=&\mathcal{S}_0^2-N=(a+b-c-d)^2-N, \nonumber\\
\mathcal{S}_{11}&=&\mathcal{S}_1^2-N=(a-b+c-d)^2-N,
\end{eqnarray}
and
\begin{eqnarray}
\mathcal{S}_{01}&=&\mathcal{S}_0\mathcal{S}_1-
\sum_{i=1}^{N}\langle\mathcal{M}_x^{
(i) } \rangle\langle\mathcal { M } _y^{(i)}\rangle\\
&=&(a+b-c-d)(a-b+c-d)-(a-b-c+d).\nonumber
\end{eqnarray}
Thus, all vertices of $\mathbbm{P}_2$ are mapped under
symmetrization onto elements of $\mathbbm{P}_{2}^S$ that
can later be parameterized by elements of the following set
%
$
\{(a,b,c,d)\in\mathbbm{N}^4\,|\,a+b+c+d=N\},$
%
which is isomorphic to a tetrahedron in $\mathbbm{N}^3$
\begin{equation}
\mathbbm{T}_N=\{(a,b,c)\in\mathbbm{N}^3\,|\,a+b+c\leq N\}.
\end{equation}
In addition, the facets of $\mathbbm{T}_N$ contain those three-tuples
$(a,b,c)$ for which either $abc=0$ or $a+b+c=N$ (equivalently $d=0$).
The cardinality of $\mathbbm{T}_N$, which is basically the number of all
possible choices of four natural numbers
summing up to $N$, and it amounts to $(1/6)(N+1)(N+2)(N+3)$.

In what follows, we show that vertices of $\mathbbm{P}_{2}^S$ are
uniqely represented by all those $4$-tuples from $\mathbbm{T}_N$ that belong to
its boundary $\partial\mathbbm{T}_N$, i.e., those for which the condition
$abcd=0$ is satisfied.

\begin{thm}\label{thm:app}
Let us denote by $\varphi:\mathbbm{T}_N\mapsto \mathbbm{P}_{2}^S$
the above parametrization, i.e.,
\begin{equation}\label{parametr}
\varphi((a,b,c,d))=(\mathcal{S}_0,\mathcal{S}_1,\mathcal{S}_{00},\mathcal{S}_{
01},\mathcal{S}_{11}).
\end{equation}
Then $\varphi(p)$ is a vertex of $\mathbbm{P}_{2}^S$ iff
$p\in\partial\mathbbm{T}_N$.
\end{thm}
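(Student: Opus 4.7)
My plan is to prove the two implications separately, relying only on the explicit parametrization in (\ref{parametr}) and the vertex identities $\mathcal{S}_{kk}=\mathcal{S}_k^2-N$, $k=0,1$.

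For the forward direction (interior $\Rightarrow$ not a vertex), I would give an explicit midpoint decomposition. Given $p=(a,b,c,d)\in\mathbbm{T}_N$ with $abcd>0$, set $p_\pm:=(a\mp 1,b\pm 1,c\pm 1,d\mp 1)$; both stay in $\mathbbm{T}_N$ since every entry of $p$ is at least $1$. Inserting these into (\ref{S0S1}) and (\ref{jazz}), one checks that $\mathcal{S}_0,\mathcal{S}_1$ (and hence $\mathcal{S}_{00},\mathcal{S}_{11}$) are unchanged, while $u:=a-b-c+d$ shifts by $\mp 4$. Since $\mathcal{S}_{01}=\mathcal{S}_0\mathcal{S}_1-u$ is affine in $u$, averaging yields $\varphi(p)=\tfrac{1}{2}(\varphi(p_+)+\varphi(p_-))$, and $\varphi(p_+)\neq\varphi(p_-)$ because their $u$-values differ, so $\varphi(p)$ is not extreme.

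For the converse (boundary $\Rightarrow$ vertex), the plan is to exhibit for every $p_0\in\partial\mathbbm{T}_N$ a linear functional on $\mathbbm{P}_2^S$ uniquely minimized at $\varphi(p_0)$ over the finite set $\varphi(\mathbbm{T}_N)$; this forces $\varphi(p_0)$ to be a vertex, since vertices of $\mathbbm{P}_2^S$ lie in $\varphi(\mathbbm{T}_N)$ and a linear functional attains its minimum at a vertex. Writing $\mathcal{S}_k^{(0)}:=\mathcal{S}_k(p_0)$, consider
\begin{equation}
\tilde L_0:=\mathcal{S}_{00}+\mathcal{S}_{11}-2\mathcal{S}_0^{(0)}\mathcal{S}_0-2\mathcal{S}_1^{(0)}\mathcal{S}_1+(\mathcal{S}_0^{(0)})^2+(\mathcal{S}_1^{(0)})^2+2N.
\end{equation}
At any vertex of $\mathbbm{P}_2^S$ this collapses to $(\mathcal{S}_0-\mathcal{S}_0^{(0)})^2+(\mathcal{S}_1-\mathcal{S}_1^{(0)})^2\geq 0$ via the identities above; linearity then propagates $\tilde L_0\geq 0$ to the whole polytope, with equality saturated exactly on the slice $\{\mathcal{S}_0=\mathcal{S}_0^{(0)},\,\mathcal{S}_1=\mathcal{S}_1^{(0)}\}$. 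Within this slice the admissible four-tuples form a one-parameter family indexed by $a+d$, along which $\mathcal{S}_{01}$ depends monotonically on $u=2(a+d)-N$; the condition $abcd=0$ forces $p_0$ to realise the extremal value of $u$ in that family (minimum if $a_0d_0=0$, maximum if $b_0c_0=0$). Adding $\mp\epsilon\mathcal{S}_{01}$ to $\tilde L_0$ with the sign chosen accordingly then yields the desired uniquely-minimized linear functional.

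The main obstacle is the quantitative choice of $\epsilon$ in this last step. Off the slice, $\tilde L_0$ is bounded below by a strictly positive integer (as $\mathcal{S}_0,\mathcal{S}_1$ take integer values of a fixed parity), whereas on the slice the $\epsilon$-term breaks the degeneracy in favour of $p_0$; picking $\epsilon$ strictly smaller than this off-slice gap divided by the diameter of $\mathcal{S}_{01}$ over $\varphi(\mathbbm{T}_N)$ closes the argument. Corner points of $\mathbbm{T}_N$ (two or more vanishing components) are an easier sub-case: they are already the unique preimage of their $(\mathcal{S}_0^{(0)},\mathcal{S}_1^{(0)})$, so $\tilde L_0$ alone witnesses them as vertices without any perturbation.
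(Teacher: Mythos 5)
Your proof is correct, but the ``if'' direction takes a genuinely different route from the paper's. For the ``only if'' direction you and the paper do essentially the same thing: you both move along the direction $(1,-1,-1,1)$, which fixes $\mathcal{S}_0,\mathcal{S}_1,\mathcal{S}_{00},\mathcal{S}_{11}$ and shifts only $\mathcal{S}_{01}$; your unit-step midpoint decomposition is just the paper's decomposition with $\alpha=\beta=1$ instead of pushing all the way to $\partial\mathbbm{T}_N$. For the ``if'' direction the paper argues by contradiction: it assumes a proper convex decomposition of $\varphi(p)$ into vertices, uses the identities $\mathcal{S}_{ll}^{(i)}=[\mathcal{S}_l^{(i)}]^2-N$ and a discriminant (strict-convexity) argument to force all $\mathcal{S}_l^{(i)}$ to coincide, deduces $p=\sum_i\lambda_i p_i$, and then shows $p$ would have to lie in the interior of $\mathbbm{T}_N$. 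You instead construct, for each boundary point, an explicit exposing linear functional: the quadratic-in-$(\mathcal{S}_0,\mathcal{S}_1)$ expression $\tilde L_0$, which is linear on $\mathbbm{P}_2^S$ because of the same identities, plus a small multiple of $\mathcal{S}_{01}$ to break the degeneracy along the one-parameter slice. Both are sound; the paper's is shorter, while yours is constructive and proves the slightly stronger statement that every such point is \emph{exposed}, and the ``perfect square plus small term'' structure of your functional is exactly the mechanism behind the paper's Bell-inequality class in Appendix~B, so your argument makes that connection explicit. Your quantitative handling of $\epsilon$ is fine (the off-slice gap is at least $4$ by the parity of $\mathcal{S}_0,\mathcal{S}_1$, and $|\mathcal{S}_{01}|\leq N(N-1)$).

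One small inaccuracy in a side remark: points with two vanishing components are \emph{not} always the unique preimage of their $(\mathcal{S}_0^{(0)},\mathcal{S}_1^{(0)})$. For the edges $a=d=0$ or $b=c=0$ one has $\mathcal{S}_1=-\mathcal{S}_0$ with $|\mathcal{S}_0|<N$ in general, and e.g. $(0,b,c,0)$ shares its slice with $(1,b-1,c-1,1)$. This does not damage the proof, since such points still realise the extremal value of $u=a-b-c+d$ in their slice and are therefore covered by your general perturbation argument; you should simply drop the claim that $\tilde L_0$ alone suffices there.
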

\begin{proof}We start from the ``only if'' part.
Assume on the contrary that $p=(a,b,c,d)$ belongs to the interior of
$\mathbbm{T}_N$, which means that all its components are larger than zero,
(i.e., $a,b,c,d\geq 1$). Then, let us consider a vector
$v=(1,-1,-1,1)\notin\mathbbm{T}_N$ and notice that the values of
$\mathcal{S}_k$ and $\mathcal{S}_{kk}$ with $k=0,1$ are constant along the line
$p+\lambda v$ for any $\lambda\in \mathbbm{R}$, while
$\mathcal{S}_{01}(p+\lambda v)=\mathcal{S}_{01}(p)-4\lambda$. Hence, for any
$\alpha,\beta>0$,
\begin{eqnarray}
&&\hspace{-0.5cm}\alpha\varphi(p+\beta v)+\beta\varphi(p-\alpha v)\nonumber\\
&&=\alpha(\mathcal{S}_0(p),\mathcal{S}_1(p),\mathcal{S}_{00}(p),\mathcal{S}_{01}
(p)-4\beta,\mathcal{S}_{11}(p))\nonumber\\
&&\hspace{0.3cm}+\beta(\mathcal{S}_0(p),\mathcal{S}_1(p),\mathcal{S}_{00}(p),
\mathcal { S } _ { 01 }
(p)+4\alpha,\mathcal{S}_{11}(p))\nonumber\\
&&=(\alpha+\beta)(\mathcal{S}_0(p),\mathcal{S}_1(p),\mathcal{S}_{00}(p),
\mathcal{S}_{01}(p),\mathcal{S}_{11}(p))\nonumber\\
&&=(\alpha+\beta)\varphi(p),
\end{eqnarray}
which allows us to express $\varphi(p)$ as
\begin{equation}
\varphi(p)=\frac{\alpha}{\alpha+\beta}\varphi(p+\beta
v)+\frac{\beta}{\alpha+\beta}\varphi(p-\alpha v).
\end{equation}
Now choose $\alpha=\min\{a,d\}$ and $\beta=\min\{b,c\}$.
Then, both $p+\beta v$ and $p-\alpha v$ belong to the boundary
of $\mathbbm{T}_N$, and consequently $\varphi(p)\in\mathbbm{P}_{2}^S$
represented by $p$ can be written as a convex combination of
two other elements $p+\min\{a,d\}v$ and $p-\min\{b,c\}v$
of $\mathbbm{P}_{2}^S$. This implies that $\varphi(p)$ cannot
be extremal.

In order to prove the ``if'' part, assume that
$p\in\partial\mathbbm{T}_N$ and
$\varphi(p)=(\mathcal{S}_0,\mathcal{S}_1,\mathcal{S}_{00},\mathcal{S}_{01},
\mathcal{S}_{11})$ is not a vertex of
$\mathbbm{P}_{2}^S$. Then, $\varphi(p)$ can be decomposed into a convex
combination of vertices of $\mathbbm{P}_{2}^S$ that
are represented within our parametrization by
$p_i=(a_i,b_i,c_i,d_i)\in\mathbbm{T}_N$, i.e.,
\begin{equation}\label{Olot}
\varphi(p)=\sum_{i=0}^{k}\lambda_i\varphi(p_i)
\end{equation}
with $0<\lambda_i<1$ summing up to unity, and
\begin{equation}\label{combination}
\varphi(p_i)=(\mathcal{S}_0^{(i)},\mathcal{S}_1^{(i)},\mathcal{S}_{00}^{(i)},
\mathcal {S}_{ 01 }^{(i)},\mathcal{S}_{11}^{(i)}).
\end{equation}
By combining Eqs. (\ref{parametr}) and (\ref{combination}), Eq. (\ref{Olot})
is equivalent to the following five equations:
%
%
\begin{equation}\label{LLobregat}
\mathcal{S}_l=\sum_{i=0}^k\lambda_i\mathcal{S}_l^{(i)},\qquad
\mathcal{S}_{ll}=\sum_{i=0}^k\lambda_i\mathcal{S}_{ll}^{(i)}
\end{equation}
for $l=0,1$, and
\begin{equation}
\mathcal{S}_{01}=\sum_{i=0}^k\lambda_i\mathcal{S}_{01}^{(i)}.
\end{equation}
Since for all vertices of $\mathbbm{P}_2$ it holds that
$\mathcal{S}_{ll}^{(i)}=[\mathcal{S}_{l}^{(i)}]^2-N$ [cf. Eqs. (\ref{jazz})],
Eqs. (\ref{LLobregat}) imply that $\mathcal{S}_l^{(i)}$ must satisfy
\begin{equation}\label{eq:cond30}
\sum_{i}\lambda_i\left(\mathcal{S}_l^{(i)}\right)^2=\left(\sum_{i}\lambda_i
\mathcal{S}_l^{(i)}\right)^2 \qquad (l=0,1).
\end{equation}
If we think of Eq. (\ref{eq:cond30}) as a quadratic equation
for a particular $\mathcal{S}_{l}^{(m)}$, i.e., 
i.e.,
\begin{eqnarray}
\lambda_{m}(\lambda_{m}-1)\left(\mathcal{S}_l^{(m)}\right)^2+
2\lambda_m\mathcal{S}_l^{(m)}\sum_{i\neq m}\lambda_i\mathcal{S}_l^{(i)}
\nonumber\\
+\left(\sum_{i\neq m}\lambda_i\mathcal{S}_{l}^{(i)}\right)^2-\sum_{i\neq
m}\lambda_i\left(\mathcal{S}_l^{(i)}\right)^2=0
\end{eqnarray}
it has real solutions if and only if its discriminant is nonnegative,
which in turn holds iff
\begin{equation}\label{condition1}
-4\lambda_0\sum_{\substack{i<j\\i,j\neq
m}}\lambda_i\lambda_j\left(\mathcal{S}_l^{(i)}-\mathcal{S}_{l}^{(j)}\right)
^2\geq 0.
\end{equation}
Since all $\lambda$'s are positive, the above condition is fulfilled iff
$\mathcal{S}_l^{(i)}=\mathcal{S}_l^{(j)}$ for all $i,j\neq m$ and $l=0,1$.
Due to the fact that (\ref{condition1}) must be obeyed
for any $m$, we have eventually that
\begin{equation}\label{Gdansk}
\mathcal{S}_l^{(i)}=\mathcal{S}_l^{(j)}=\mathcal{S}_l
\end{equation}
for any $i,j=1,\ldots,k$ and $l=0,1$.

On the other hand, the assumption that $\varphi(p)$ is not a vertex of
$\mathbbm{P}_{2}^S$, i.e., that it can be decomposed as in
(\ref{Olot}), means that $\mathcal{S}_{01}^{(i)}$ cannot be equal, as otherwise
$p_i$ are all the same. If we then express
$\mathcal{S}_{01}=\mathcal{S}_0\mathcal{S}_1-(a-b-c+d)$
and $\mathcal{S}_{01}^{(i)}=\mathcal{S}_{0}^{(i)}\mathcal{S}_{1}^{(i)}
-(a_i-b_i-c_i+d_i)$, this, in virtue of Eq. (\ref{Gdansk}),
implies
\begin{equation}
a-b-c+d=\sum_{i}\lambda_i(a_i-b_i-c_i+d_i).
\end{equation}
If we further note that 
\begin{equation}\label{Danzig}
a_i+b_i+c_i+d_i=N=a+b+c+d
\end{equation}
must hold for any $i$, we infer
that $a$, $b$, $c$, and $d$
are convex combinations of
$a_i$, $b_i$, $c_i$, and $d_i$, respectively, and, as a result
\begin{equation}
 p=\sum_{i=1}^{k}\lambda_ip_{i}.
\end{equation}
In order to complete the proof (that is, to reach the contradiction with the
assumption) it is enough to notice that $p\in\mathrm{int}\mathbbm{T}_N$, since
not all of $p_i$ can belong to the same facet of $\mathbbm{T}_{N}$. In
fact, if all $p_i$ belong to the same facet of the tetrahedron, one of their
coordinates (the same one for all $i$) must be zero (for instance, $a_i=0$).
Then, it directly follows from Eqs. (\ref{Gdansk}) and (\ref{Danzig}) that
all $p_i$s are equal, contradicting the assumption
that (\ref{Olot}) is a proper convex combination. Consequently, $p$ belongs to
the interior of the tetrahedron, which contradicts the assumption that
$p\in\partial\mathbbm{T}_N$, completing the proof.
\end{proof}

\subsection{Appendix B: A class of symmetric Bell inequalities}
\label{AppB}

In this appendix, we revisit the three-parameter
class of symmetric multipartite Bell inequalities and compute in detail its
classical bound. Recall for this purpose that $\gamma=x^2$ and
$\varepsilon=y^2$, where $x$ and $y$ are
positive integers. Assume that
$\mu=\beta/y\in\mathbbm{Z}$, $\delta=\sigma xy$ and
$\alpha_{\pm}=x[\sigma\mu\pm(x+y)]$ with $\sigma$ denoting the sign of $\delta$,
and further that the parity of $\mu$ is
opposite to that of $\varepsilon$ ($\gamma$) for odd
$N$ (even $N$). In what follows we will show that the classical bound of the
resulting Bell inequality is
\begin{equation}\label{StarWars}
\beta_{C}=\frac{1}{2}\left[N(x+y
)^2+(\sigma\mu\pm x)^2\right ]-\frac{1}{2}.
\end{equation}
First, note that for all local deterministic models,
the left-hand side of (\ref{BellIneq}) can be rewritten as
\begin{equation}
I=\alpha\mathcal{S}_0+\beta\mathcal{S}_1+\frac{\gamma}{2}(\mathcal{S}
_0^2-N)+\delta(\mathcal{S}_0\mathcal{S}_1-z)+\frac{\varepsilon}{2}(\mathcal{S}
_1^2-N),
\end{equation}
where $z=a-b-c+d$. The above choice of parameters
further implies that
\begin{eqnarray}
\nonumber\\
I&=&\frac{x^2}{2}\mathcal{S}_0^2+\sigma
xy\mathcal{S}_0\mathcal{S}_1+\frac{y^2}{2}
\mathcal{S}_1^2-\frac{N}{2}(x+y)-\sigma xyz\nonumber\\
&&+x[ \sigma\mu\pm(x+y) ] \mathcal
{ S } _0+\beta\mathcal { S } _1 \,,
\end{eqnarray}
which can rewritten as

%
%
\begin{eqnarray}
I&=&\frac{1}{2}\left(x\mathcal{S}_0+\sigma y\mathcal{S}_1+\sigma\mu\pm
x\right)^2-\frac{1}{2}(\sigma\mu\pm x)^2\nonumber\\
&&+xy(\pm\mathcal{S}_0\mp\sigma
\mathcal{S}_1-\sigma z)-\frac{1}{2}N(x+y).
\end{eqnarray}
Within the parameterization (\ref{S0S1}),
$\pm\mathcal{S}_0\mp\sigma \mathcal{S}_1-\sigma z =4r-N$,
where $r$ depends on $\alpha$ and the sign of $\delta$ (i.e., $\sigma$)
as follows:
%
\begin{equation}
r=\left\{
\begin{array}{ll}
b, \quad& \mathrm{for}\qquad\alpha_{+},\sigma=1\\
a, \quad& \mathrm{for}\qquad\alpha_{+},\sigma=-1\\
c, \quad& \mathrm{for}\qquad\alpha_{-},\sigma=1\\
d, \quad& \mathrm{for}\qquad\alpha_{-},\sigma=-1\\
\end{array}
\right..
\end{equation}
As a consequence
\begin{eqnarray}
I&=&\frac{1}{2}\left(x\mathcal{S}_0+\sigma y\mathcal{S}_1+\sigma\mu\pm
x\right)^2+4xyr\nonumber\\
&&-\frac{1}{2}\left[(\sigma\mu\pm x)^2+N(x+y)^2\right].
\end{eqnarray}
When comparing the above expression with Eq.
(\ref{StarWars}), it follows that in
order to prove the latter to be the classical bound
of $I$, it is enough to show that
\begin{equation}\label{CanalOlimpic}
\left(x\mathcal{S}_0+\sigma y\mathcal{S}_1+\sigma\mu\pm
x\right)^2+8xyr\geq 1.
\end{equation}
Since both $x$ and $y$ are positive integers,
the above inequality is trivially satisfied if $r\neq 0$.
For $r=0$ (i.e., when optimizing over the facets of
$\mathbbm{T}_N$), observe that the expression in the parentheses
is integer. Therefore, the inequality
(\ref{CanalOlimpic}) is not satisfied only if the parenthesis is equal to
zero. To prove that this may not happen,
we will show that the above
assumptions guarantee that the parity of $x\mathcal{S}_0+\sigma
y\mathcal{S}_1+\sigma\mu\pm x$ is always odd. Let us consider the cases of odd
and even $N$ separately. For odd $N$, one finds that both $\mathcal{S}_0$ and
$\mathcal{S}_1$ are odd. Hence, $x\mathcal{S}_0+\sigma
y\mathcal{S}_1+\sigma\mu\pm x$ has the same parity as $y+\mu$. Then, by
assumption, $\mu$ has opposite parity to $\varepsilon$. Noting
that $\varepsilon=y^2$, and that
both $y^2$ and $y$ have the same parity, we conclude that $y + \mu$
is odd, meaning that the above expression cannot be zero. For even $N$,
$\mathcal{S}_0$ as well as $\mathcal{S}_1$ are even, implying that
$x\mathcal{S}_0+y\sigma\mathcal{S}_1$ is even. The assumptions further
guarantee that $\sigma \mu \pm x$ is odd,
and therefore the expression is nonzero.

Under the additional assumption that $x$ and $y$ are coprimes one is also
able to determine analitycally all vertices of $\mathbbm{P}_2^S$ saturating
these Bell inequalities, and thus check their tightness. The detailed
considerations will be presented elsewhere \cite{tech}.


\section{Appendix C: Violation of two-body Bell inequalities by the Dicke
states}
\label{AppC}

Here we compute in detail the classical bound of the
Bell inequality violated by the Dicke states, and present the explicit form
of the reduced bipartite subsystem of the Dicke
state $\ket{D_N^{\lceil N/2\rceil}}$.

\subsection{The classical bound}

The explicit form
of these Bell inequalities for even $N$ reads
\begin{equation}\label{BellEven}
I^e_N=\frac{N(N-1)}{4}\mathcal{S}_{00}+\frac{N}{
2 } \mathcal { S } _ { 01 } -\frac{1}{2}\mathcal { S } _ { 11 } \geq -\beta_C,
\end{equation}
while for odd $N$,
\begin{eqnarray}\label{BellOdd}
I^o_{N}&=&
\frac{1}{2}\binom{N} { 2 }
\mathcal{S}_{00}+\frac{N}{2}\mathcal{S}_{01}-\frac{1}
{ 2 } \mathcal { S } _ {
11 }\nonumber\\
&&
+\frac{N(N-1)}{2}\mathcal{S}_0+\frac{N-1
} {2}\mathcal{S}_1
 \geq -\beta_C.
\end{eqnarray}

Our aim is to prove that the minimal value of
the Bell inequalities
(\ref{BellEven}) and (\ref{BellOdd})
over local models is given by


%
%
\begin{equation}\label{betaC}
\min_{\partial \mathbbm{T}_N}I_N^{e/o}=-\frac{1}{4}\left\{
\begin{array}{cc}
N(N-1)(N+2),& \quad N\, \mathrm{even}\\[1ex]
N(N-1)(N+3),& \quad N\, \mathrm{odd},
\end{array}
\right.
%
\end{equation}
and, their classical bound is then
$\beta_{C}=-\min_{\partial\mathbbm{T}_{N}}I_N^{e/o}$.

For this purpose, let us first notice that for
classical correlations the following contraints hold:
\begin{equation}\label{ineqs}
-N\leq \mathcal{S}_{00},\mathcal{S}_{11}\leq N(N-1),
\quad |\mathcal{S}_{01}|\leq N(N-1),
\end{equation}
and $|\mathcal{S}_k|\leq N$ with $k=0,1$. Hence, the first term in
Eqs. (\ref{BellEven}) and (\ref{BellOdd}) is the dominant one (it is of
the fourth order in $N$, while the remaining ones are of the second or third
orders in $N$).
This means that in order to minize $I_N^{e/o}$
over the local models, one needs to make the term
containing $\mathcal{S}_{00}$ small. Since
$\mathcal{S}_{00}=\mathcal{S}^{2}_0-N$,
the above expression suggests treating $\mathcal{S}_0$ as a parameter
with which to lower the number of variables in the
optimization. Then, among all the solutions parameterized by $\mathcal{S}_0$ we can choose the smallest one.

Let us now switch to the parametrization in terms of
(\ref{parametrization}). We already know that
$\mathcal{S}_0=a+b-c-d$ which together with
$a+b+c+d=N$, allows us to decrease the number of free variables in the
optimization down to two.
That is, by taking, for instance,
\begin{equation}\label{param}
a=\frac{1}{2}(N+\mathcal{S}_0)-b,\qquad c=\frac{1}{2}(N-\mathcal{S}_0)-d,
\end{equation}
we can express the two-body expectation values as
\begin{equation}\label{Penedes}
\mathcal{S}_{00}=\mathcal{S}_0^2-N, \qquad \mathcal{S}_{11}=[N-2(b+d)]^2-N,
\end{equation}
and
\begin{equation}\label{Besalu}
\mathcal{S}_{01}={\mathcal S}_0[N-1-2(b+d)]+2(b-d),
\end{equation}
where we now consider $b$ and $d$ as free variables that are nonnegative
integers constrained as
\begin{equation}\label{Poblet}
0\leq b\leq \frac{1}{2}(N+\mathcal{S}_0),\qquad 0\leq d\leq
\frac{1}{2}(N-\mathcal{S}_0).
\end{equation}

Notice finally that from theorem
\ref{thm:app} it follows that, in order to find $\beta_C$, it suffices to
minimize $I_N^{e/o}$ over the $4$-tuples $(a,b,c,d)$ belonging to the boundary
of the tetrahendron, i.e., those for which $abcd=0$.
Eqs. (\ref{param}) imply that the cases of $a=0$ or $d=0$ are now equivalent to
$b=(1/2)(N+\mathcal{S}_0)$ or $d=(1/2)(N-\mathcal{S}_0)$, respectively, in
(\ref{Poblet}). Whithin this framework, treating $\mathcal{S}_0$
as a parameter means that we intersect
the three-dimensional tetrahedron with hyperplanes of constant
$\mathcal{S}_0$ and look for the minimal value of $I_N^{e/o}$ for
points lying on the boundary of the resulting two-dimensional object.
Then, we choose the optimal solution among those parametrized by $\mathcal{S}_0$.

In what follows we will separate the proof into the
cases of even and odd $N$, and in each one we consider
all the facets of the tetrahedron separately.

\subsubsection{Even $N$}

When combining Eqs. (\ref{Penedes}) and (\ref{Besalu}) with Eq.
(\ref{BellEven}), one obtains a function in $b$ and $d$ parameterized by $\mathcal{S}_0$:
\begin{eqnarray}\label{Tarragona}
I_N^e(b,d;\mathcal{S}_0)&=&\frac{1}{2}\Big\{\frac{N(N-1)}{2}(\mathcal{S}
_0^2-N)-[N-2(b+d)]^2 \nonumber\\
&&+N\left [ \mathcal{S}_0(N-2(b+d)-1)+2(b-d) \right]+N\Big\}.\nonumber
\end{eqnarray}

\textit{Case a=0.} As commented before, this case is equivalent to
$b=(N+\mathcal{S}_0)/2$, which when applied to Eq. (\ref{Tarragona})
gives
\begin{eqnarray}\label{Badalona}
I_N^e(\tfrac{N+\mathcal{S}_0}{2},d;\mathcal{S}_0)&=&\frac{1}{4}
(N^2-3N-2)(\mathcal { S
} _0^2-N)\nonumber\\
&&-d[2d+\mathcal{S}_0(N+2)+N].
\end{eqnarray}
This is a quadratic function in $d$ which, since its second derivative with
respect to $d$ is negative, it has a local maximum.
Therefore, it attains its minimal value either at $d=0$ or
$d=(N-\mathcal{S}_0)/2$. In the first case,
$I_N^e(\tfrac{N+\mathcal{S}_0}{2},0;\mathcal{S}_0)=(1/4)(\mathcal{S}
_0^2-N)(N^2-3N-2 )$, which is clearly minimal for $\mathcal{S}_0=0$. Hence
$I_N^e(\tfrac{N+\mathcal{S}_0}{2},0;0)=-(1/4)N(N^2-3N-2 )$.

In the second case, i.e., $d=(N-\mathcal{S}_0)/2$,
Eq. (\ref{Badalona}) implies
\begin{equation}\label{ElPrat}
I_N^e(\tfrac{N+\mathcal{S}_0}{2},\tfrac{N-\mathcal{S}_0}{2};\mathcal{S}
_0)=-\frac { N
(N-1)}{4}[N+2-\mathcal{S}_0(\mathcal{S}_0-2)].
\end{equation}
It is easy to check that this expression attains its lowest value at either
$\mathcal{S}_0=0$ or $\mathcal{S}_0=2$, which corresponds to
\begin{equation}
I_N^e(\tfrac{N+\mathcal{S}_0}{2},\tfrac{N-\mathcal{S}_0}{2};\mathcal{S}
_0)=-\frac { 1 } { 4 } N(N-1)(N+2)\quad (\mathcal{S}_0=0,2),
\end{equation}
i.e., the value in Eq. (\ref{betaC}). Hence, for two different $4$-tuples
\begin{equation}\label{vec1}
\left(0,\frac{N}{2},0,\frac{N}{2}\right),\qquad
\left(0,\frac{N}{2}+1,0,\frac{N}{2}-1\right),
\end{equation}
we obtain the value of $I_N^e$ given in Eq. (\ref{betaC}).

\textit{Case b=0.} It follows from Eq. (\ref{Tarragona})
\begin{eqnarray}\label{PinkFloyd}
I_N^e(0,d;\mathcal{S}_0)&=&\frac{1}{2}\Big\{\frac{N(N-1)}{2}(\mathcal{S}
_0^2-N)-
(N-2d)^2+N\nonumber\\
&&\hspace{0.5cm}+N[\mathcal{S}_0(N-2d-1)-2d]\Big\}.
\end{eqnarray}
It is again not difficult to see that the second derivative of
$I_N^e(0,d;\mathcal{S}_0)$ with respect to $d$ is negative, and therefore we
look for its minimal value at the boundary of the range of $d$. For $d=0$
the above expression reduces to the right-hand side of
Eq. (\ref{ElPrat}), which, as previously mentioned, has minima for
$\mathcal{S}_0=0$ and $\mathcal{S}_0=2$. This results in two additional elements
of $\mathbbm{T}_4$ for which $I_N^e$ attains the value in Eq. (\ref{betaC}),
i.e.,
\begin{equation}\label{vec2}
\left(\frac{N}{2}-1,0,\frac{N}{2}+1,0\right),\qquad
\left(\frac{N}{2},0,\frac{N}{2},0\right)
\end{equation}
For $d=(N-\mathcal{S}_0)/2$, it follows from
Eq. (\ref{PinkFloyd}) that
\begin{equation}
I_N^e(0,\tfrac{N-\mathcal{S}_0}{2};\mathcal{S}_0)=\frac{1}{4}(\mathcal{S}
_0^2-N)(N^2+N-2),
\end{equation}
which has a minimum at $\mathcal{S}_0=0$ giving the fifth
point saturating our Bell inequality
\begin{equation}\label{vec3}
\left(\frac{N}{2},0,0,\frac{N}{2}\right).
\end{equation}

\textit{Cases c=0 or d=0.} First,
we notice that the case $c=0$ is equivalent to $d=(N-\mathcal{S}_0)/2$. Then,
following exactly the same reasoning as above, one straightforwardly finds that
assuming either $d=(N-\mathcal{S}_0)/2$ or $d=0$, the lowest value of $I_N^e$ is
$-(1/4)N(N-1)(N+2)$ and is obtained for the same five vectors as before, i.e.,
(\ref{vec1}), (\ref{vec2}), and (\ref{vec3}).

\subsubsection{Odd N}

By applying Eqs. (\ref{param}), (\ref{Penedes}),
and (\ref{Besalu}), we can turn $I_N$ into a two-variable function
%
\begin{eqnarray}\label{SantBoi}
I_N^o(b,d;\mathcal{S}_0)&=&\frac{1}{4}N(N-1)[\mathcal{S}_0(\mathcal{
S } _0+4)-N]-2(b^2+d^2)\nonumber\\
&&-b[4d-1+N(\mathcal{S}_0-2)]-d(N\mathcal{S}_0-1)\nonumber\\
\end{eqnarray}
parameterized by $\mathcal{S}_0$. Notice that since $N$ is odd,
$\mathcal{S}_0$ is also odd.

\textit{Case a=0.} This case is equivalent to $b=(N+\mathcal{S}_0)/2$,
and the latter when applied to Eq. (\ref{SantBoi}) gives
\begin{eqnarray}\label{Figueres}
I_N^o(\tfrac{N+\mathcal{S}_0}{2},d;\mathcal{S}_0)&=&-d[2d+\mathcal{S}
_0(N+2)+2N-1 ]\nonumber\\
&&+\frac{1}{4}(N^2-3N-2)(\mathcal{S}_0^2-N) \nonumber\\
&&+\frac { 1 } { 2 }
(N-1)^2\mathcal{S}_0\nonumber\\
&=&I_{N}^e(\tfrac{N+\mathcal{S}_0}{2},d;\mathcal{S}_0)+d+\frac{1}{2}
(N-1)^2\mathcal{S}_0.\nonumber\\
\end{eqnarray}
Since the second derivative of
$I_N^o(\tfrac{N+\mathcal{S}_0}{2},d;\mathcal{S}_0)$
with respect to $d$ is negative for any $\mathcal{S}_0$, it has its minimal
value either at $d=0$ or $d=(N-\mathcal{S}_0)/2$. For the first case, one
obtains
\begin{equation}
I^o_N(\tfrac{N+\mathcal{S}_0}{2},0;\mathcal{S}_0)=\frac{1}{4}
(\mathcal{S}_0^2-N)(N^2-3N-2)+\frac{1}{2}(N-1)^2\mathcal{S}
_0
\end{equation}
which is minimal at $\mathcal{S}_0=0$, and it reads
$I^o_N(\tfrac{N+\mathcal{S}_0}{2},0;0)=-(1/4)N(N^2-3N-2)$.

For $d=(N-\mathcal{S}_0)/2$, it follows from Eq. (\ref{Figueres}) that
\begin{equation}\label{lHospitalet}
I^o_N(\tfrac{N+\mathcal{S}_0}{2},\tfrac{N-\mathcal{S}_0}{2};\mathcal{S
} _0)=-\frac{1}{4}N(N-1)(N+4-\mathcal{S}_0^2).
\end{equation}
Clearly, the above expression is minimal for $\mathcal{S}_0=0$, but since
$\mathcal{S}_0$ must be odd, we obtain the lowest value for
$\mathcal{S}_0=\pm1$. As a result, the Bell expression
$I_N^o$ attains the minimum value (\ref{betaC}) at the following two
elements of $\mathbbm{T}_N$
\begin{equation}\label{ver1}
\left(0,\frac{N\pm1}{2},0,\frac{N\mp 1}{2}\right).
\end{equation}

\textit{Case b=0.} For $b=0$, Eq. (\ref{SantBoi}) simplifies to
\begin{eqnarray}\label{Castefa}
I^o_{N}(0,d;\mathcal{S}_0)&=&\frac{1}{4}N(N-1)[\mathcal{S}_0(\mathcal{
S } _0+4 )-N]-2d^2\nonumber\\
&&-d(N\mathcal{S}_0-1).
\end{eqnarray}
$I^o_{N}(0,d;\mathcal{S}_0)$ is thus a quadratic
function in $d$ that has a local maximum. Hence, it attains its minimal values
at the boundary of the range of $d$, i.e.,
either for $d=0$ or for $d=(N-\mathcal{S}_0)/2$. For $d=0$,
\begin{equation}
I^o_{N}(0,0;\mathcal{S}_0)=\frac{1}{4}N(N-1)[\mathcal{S}_0(\mathcal{
S } _0+4 )-N],
\end{equation}
which, since $\mathcal{S}_0$ must be odd, is minimal at
$\mathcal{S}_0=-3$ or $\mathcal{S}_0=-1$, and the corresponding value is
$-(1/4)N(N-1)(N+3)$. Consequently, at the following two vertices
\begin{equation}\label{ver2}
\left(\frac{N-1}{2},0,\frac{N+1}{2},0\right),\quad
\left(\frac{N-3}{2},0,\frac{N+3}{2},0\right)
\end{equation}
$I_N^o$ attains (\ref{betaC}).

For $d=(N-\mathcal{S}_0)/2$, Eq. (\ref{Castefa}) reads
\begin{equation}
I^o_{N}(0,\tfrac{N-\mathcal{S}_0}{2};\mathcal{S}_0)=-\frac{N-1}{4}
[ (N+2)(N-\mathcal{S}_0^2)-2(N+1)\mathcal{S}_0].
\end{equation}
Since $\mathcal{S}_0$ is odd, $I^o_{N}$ is minimal at
$\mathcal{S}_0=-1$. The corresponding minimum is
$-(1/4)N(N-1)(N+3)$, and it is attained at
\begin{equation}\label{ver3}
\left(\frac{N-1}{2},0,0,\frac{N+1}{2}\right).
\end{equation}

\textit{Cases c=0 or d=0.} Similarly, for either
$c=0$ or $d=0$, the minimal value of $I^o_N$ is
$-(1/4)N(N-1)(N+3)$, which is attained at the
five vertices (\ref{ver1}), (\ref{ver2}), and (\ref{ver3}).

In conclusion, the lowest value of $I_N^{e/o}$ for both even and odd $N$
is the one given in Eq. (\ref{betaC}) and is realized by five
elements of $\mathbbm{T}_N$: $(\ref{vec1})$, $(\ref{vec2})$,
and $(\ref{vec3})$ for even $N$, and (\ref{ver1}), (\ref{ver2}),
and (\ref{ver3}) for odd $N$. This also implies that the Bell inequality
$I_N^{e/o}$ is tangent to $\mathbbm{P}_{2,S}$ on five vertices. Since these are
linearly independent, the Bell inequality indeed represents a
facet of $\mathbbm{P}_2^S$.

\subsection{A bipartite reduction of the Dicke states}

The bipartite subsystem of the Dicke state $\ket{D_N^{k}}$ can be
determined analytically (see e.g. Ref. \cite{Molmer}).
For $k=\lceil N/2\rceil$, it has the following form
\begin{equation}
 \rho_{N}^{\lceil N/2\rceil}=\frac{1}{N(N-1)}
 \left(
 \begin{array}{cccc}
 p_N & 0 & 0 & 0 \\
 0 & q_N & q_N & 0 \\
 0 & q_N & q_N & 0 \\
 0 & 0 & 0 & r_N
\end{array}
\right),
 \end{equation}
where $p_N=(\lfloor N/2\rfloor -1)\lfloor N/2\rfloor $, $q_N=\lfloor N/2\rfloor \lceil N/2\rceil$, and $r_N=(\lceil N/2\rceil-1)\lceil N/2\rceil$.


\begin{thebibliography}{0}

\bibitem{bell64} J. S. Bell, Physics {\bf 1}, 195-200 (1964).

\bibitem{NLreview}N. Brunner \textit{et al.}, arXiv:1303.2849.

\bibitem{key}A. K. Ekert, Phys. Rev. Lett. \textbf{67}, 661 (1991);
J. Barrett, L. Hardy and A. Kent, Phys. Rev. Lett. \textbf{95},
010503 (2005); A. Ac\'in \textit{et al.}, Phys. Rev. Lett.
\textbf{98}, 230501 (2007).

\bibitem{rand}S. Pironio \textit{et al.}, Nature (London) \textbf{464}, 1021
(2010); R. Colbeck, Ph.D. thesis, University of Cambridge, 2007; R.
Colbeck and A. Kent, J. Phys. A \textbf{44}, 095305 (2011).


\bibitem{nierownosci}R. F. Werner and M. M. Wolf, Phys. Rev. A \textbf{64},
032112 (2001); M. \.Zukowski and C. Brukner, Phys. Rev. Lett. \textbf{88},
210401 (2001); W. Laskowski \textit{et al.}, Phys. Rev. Lett. \textbf{93},
200401 (2004); L. Aolita \textit{et al.}, Phys. Rev. Lett. \textbf{108}, 100401
(2012).

\bibitem{Sliwa}C. \'Sliwa, Phys. Lett. A \textbf{317}, 165 (2003)
(see also arXiv:quant-ph/0305190).

\bibitem{JD1}J. D. Bancal, N. Gisin and S. Pironio, J. Phys. A: Math. Theor. {\bf 43}, 385303 (2010).

\bibitem{BSV}N. Brunner, J. Sharam, T. V\'ertesi,  Phys. Rev. Lett. \textbf{108}, 110501 (2012).

\bibitem{Wiesniak}M. Wie\'sniak, M. Nawareg, and M. \.Zukowski, Phys. Rev. A \textbf{86}, 042339 (2012).

\bibitem{expSmolin}E. Amselem and M. Bourennane, Nat. Phys. \textbf{5}, 748
(2009); J. Lavoie, R. Kaltenbaek, M. Piani, and K. J. Resch, Phys. Rev.
Lett. \textbf{105}, 130501 (2010).

\bibitem{expGHZ}Y.-F. Huang \textit{et al.}, Nat. Comm. \textbf{2}, 546 (2011).

\bibitem{DickeExp}H. H\"affner \textit{et al.}, Nature 438, 643 (2005);
N. Kiesel \textit{et al.}, Phys. Rev. Lett. \textbf{98}, 063604 (2007).

\bibitem{DickeExp2}R. Prevedel \textit{et al.}, Phys. Rev. Lett.
\textbf{103}, 020503 (2009); W. Wieczorek \textit{et al.}, \textit{ibid.}
\textbf{103}, 020504 (2009).

\bibitem{Horodeckireview}R. Horodecki \textit{et al.}, Rev. Mod. Phys.
\textbf{81}, 865 (2009).

\bibitem{EntTwoBody}G. T\'oth, Phys. Rev. A \textbf{71}, 010301 (2005); 
J. Korbicz, J. I. Cirac, J. Wehr, and M. Lewenstein, Phys. Rev. Lett. 
\textbf{94}, 153601 (2005); G. T\'oth, C. Knapp, O. G\"uhne, H. J. Briegel,
Phys. Rev. Lett. \textbf{99}, 250405 (2007); O. Gittsovich, P. Hyllus, and O. G\"uhne, Phys. Rev. A \textbf{82}, 032306 (2010); M. Markiewicz, W. Laskowski, T. Paterek, M. \.Zukowski, Phys. Rev. A
\textbf{87}, 034301 (2013).

\bibitem{coll}P. Krammer \textit{et al.}, Phys. Rev. Lett. \textbf{103}, 100502 (2009); M. Cramer, M. B. Plenio, and H. Wunderlich, Phys. Rev. Lett. 106, 020401 (2011).

\bibitem{LMG}H. J. Lipkin, N. Meshkov, and A. J. Glick, Nucl. Phys.
\textbf{62}, 188 (1965); N. Meshkov, A. J. Glick, and H. J. Lipkin, Nucl. Phys. \textbf{62}, 199 (1965); A. J. Glick, H. J. Lipkin, and N. Meshkov, Nucl. Phys. \textbf{62}, 211 (1965).


\bibitem{tech}J. Tura \textit{et al.}, in preparation.

\bibitem{Masanes}Ll. Masanes, arXiv:quant-ph/0512001.

\bibitem{Symm}K. Eckert, J. Schliemann, D. Bru\ss, and M. Lewenstein, Ann.
Phys. \textbf{299}, 88 (2002); R. Augusiak, J. Tura, J. Samsonowicz, and M.
Lewenstein, Phys. Rev. A \textbf{86}, 042316 (2012).

\bibitem{Dicke}R. H. Dicke, Phys. Rev. \textbf{93}, 99 (1954).

\bibitem{LOR} J. I. Latorre, R. Or\'us, E. Rico, and J. Vidal, Phys. Rev. A
{\bf 71}, 064101 (2005).

\bibitem{DickeOtfried}M. Bergmann and O. G\"uhne, arXiv:1305.2818.



\bibitem{CHSH} J. F. Clauser, M. A. Horne, A. Shimony and R. A. Holt, Phys. Rev. Lett. {\bf 23}, 880 (1969).

\bibitem{TV}B. Toner and F. Verstraete, arXiv:quant-ph/0611001.

\bibitem{Lars}L. E. W\"urflinger, J.-D. Bancal, A. Ac\'in, N. Gisin, and T. V\'ertesi, Phys. Rev. A \textbf{86}, 032117 (2012).

\bibitem{Molmer}X. Wang and K. M\o{}lmer, Eur. Phys. J. D \textbf{18}, 385
(2002).

125301 (2012).



\bibitem{Porras} D. Porras and J. I. Cirac, Phys. Rev. Lett.
{\bf 92}, 207901
(2004).

\bibitem{Grass} T. Gra\ss, B. Julia-D\'iaz, M. Ku\'s, and M. Lewenstein,  arXiv:1305.5141.

\bibitem{Chang} D. E. Chang, J. I. Cirac, and H. J. Kimble, Phys. Rev. Lett. {\bf 110}, 113606 (2013).

\bibitem{Sachdev} S. Sachdev, {\it Quantum Phase Transitions}, (Canbridge University Press, Cambridge, 1999).

\bibitem{Hammerer} K. Hammerer, A. S. S{\o}rensen, and E. S. Polzik, Rev. Mod. Phys. {\bf 82}, 1041 (2010).

\bibitem{Isart} K. Eckert, O. Romero-Isart, M. Rodriguez,
M. Lewenstein, E. S. Polzik, and A. Sanpera, Nat. Phys. {\bf 4}, 50 (2008).


\end{thebibliography}
\end{document}